\newtheorem{theorem}{Theorem}
\newtheorem{conjecture}{Conjecture}
\newtheorem{lemma}{Lemma}
\newtheorem{problem}{Problem}
\newtheorem{proposition}{Proposition}
\newenvironment{proof}[1][Proof.]{\begin{trivlist}
\item[\hskip \labelsep {\bfseries #1}]}{\end{trivlist}}
\newenvironment{acknowledgement}[1][Acknowledgement]{\begin{trivlist}
\item[\hskip \labelsep {\bfseries #1}]}{\end{trivlist}}
\newcommand{\AmS}{{\protect\the\textfont2
  A\kern-.1667em\lower.5ex\hbox{M}\kern-.125emS}}
\title{A remark on Petersen coloring conjecture of Jaeger}
\author{Vahan V. Mkrtchyan\address[MCSD]{Department of Informatics and Applied Mathematics,\\
Yerevan State University, Yerevan, 0025, Armenia}%
\address{Institute for Informatics and Automation Problems,\\
National Academy of Sciences of Republic of Armenia, 0014, Armenia}
\thanks{email: vahanmkrtchyan2002@\{ysu.am, ipia.sci.am,
yahoo.com\}}}
\begin{document}

\maketitle

\begin{abstract}
If $G$ and $H$ are two cubic graphs, then we write $H\prec G$, if $G$ admits a proper edge-coloring $f$ with edges of $H$, such that for each vertex $x$ of $G$, there is a vertex $y$ of $H$ with $f(\partial_G(x))=\partial_H(y)$. Let $P$ and $S$ be the Petersen graph and the Sylvester graph, respectively. In this paper, we introduce the Sylvester coloring conjecture. Moreover, we show that if $G$ is a connected bridgeless cubic graph with $G\prec P$, then $G=P$. Finally, if $G$ is a connected cubic graph with $G\prec S$, then $G=S$.
\end{abstract}

\section{Introduction}

The graphs considered here are finite and undirected. They do not contain loops though they may contain multiple edges.  For a vertex $v$ of $G$ let $\partial_G(v)$ be the set of edges of $G$ incident to $v$.

Let $G$ and $H$ be two cubic graphs. Then an $H$-coloring of $G$ is a proper edge-coloring $f$ with edges of $H$, such that for each vertex $x$ of $G$, there is a vertex $y$ of $H$ with $f(\partial_G(x))=\partial_H(y)$. If $G$ admits an $H$-coloring, then we will write $H\prec G$.

If $H\prec G$ and $f$ is an $H$-coloring of $G$, then for any adjacent edges $e,e'$ of $G$, the edges $f(e),f(e')$ of $H$ are adjacent. Moreover, if the graph $H$ contains no triangle, then the converse is also true, that is, if a mapping $f: E(G)\rightarrow E(H)$ has a property that for any two adjacent edges $e$ and $e'$ of $G$, the edges $f(e)$ and $f(e')$ of $H$ are adjacent, then $f$ is a $H$-coloring of $G$.

Let $P$ be the well-known Petersen graph (figure \ref{Petersen}) and let $S$ be the Sylvester graph (figure \ref{Sylvester}). Both of them have ten vertices. The Petersen coloring conjecture of Jaeger states:

\begin{conjecture} (Jaeger, 1988 \cite{Jaeger}) For each bridgeless cubic graph $G$, one has $P\prec G$.
\end{conjecture}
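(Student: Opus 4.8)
The plan is to exploit the triangle-free converse stated above: since $P$ contains no triangle, a map $f:E(G)\to E(P)$ is a $P$-coloring of $G$ \emph{exactly} when it sends every pair of adjacent edges of $G$ to a pair of adjacent edges of $P$. Thus $P\prec G$ is equivalent to the existence of an adjacency-preserving map from the edges of $G$ to the edges of $P$, i.e.\ a homomorphism from the line graph $L(G)$ to the line graph $L(P)$. The whole conjecture therefore reduces to: for every bridgeless cubic $G$, there is a homomorphism $L(G)\to L(P)$. I would work entirely with this reformulation, using that $\mathrm{Aut}(P)$ acts transitively on $E(P)$ (indeed $P$ is edge-transitive, even arc-transitive), which supplies the flexibility needed to realign partial colorings along a boundary.

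First I would dispose of the easy case. If $G$ is $3$-edge-colorable, fix a proper $3$-edge-coloring with color classes $a,b,c$, pick any vertex $y$ of $P$ with incident edges $e_a,e_b,e_c$, and let $f$ send class $a$ to $e_a$, class $b$ to $e_b$, and class $c$ to $e_c$. Adjacent edges of $G$ receive distinct colors, hence distinct edges of $\partial_P(y)$, and any two distinct edges of $\partial_P(y)$ are adjacent at $y$; by the triangle-free converse $f$ is a $P$-coloring. So every $3$-edge-colorable bridgeless cubic graph satisfies $P\prec G$, and the entire difficulty is concentrated on the snarks (bridgeless cubic graphs that are not $3$-edge-colorable). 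I would then pass to a minimal counterexample $G$ and invoke the standard reductions, so that $G$ may be assumed cyclically $4$-edge-connected with girth at least $5$.

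The core of the attack would be an inductive reduction scheme on snarks: locate a reducible configuration (a suitable short path, a small cyclic edge cut, or a local subgraph) whose contraction or replacement yields a strictly smaller bridgeless cubic graph $G'$; color $G'$ by the induction hypothesis to obtain a homomorphism $L(G')\to L(P)$; and then extend it back across the configuration, using the transitivity of $\mathrm{Aut}(P)$ on $E(P)$ to adjust the colors on the boundary edges so that adjacency is preserved. It is also worth recording the equivalent reformulation of Jaeger in terms of normal $5$-edge-colorings, together with the fact that $P\prec G$ implies the Berge--Fulkerson and $5$-cycle-double-cover statements; these connections indicate which colorings of $P$ to aim for when extending.

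The hard part --- indeed the reason this is \emph{the} central open conjecture in the area --- is exactly the extension step for cyclically $5$-edge-connected snarks. For such graphs there is no available small cut to split along, and no finite list of reducible configurations is known to cover all of them; a local $P$-coloring of a fragment need not glue to a $P$-coloring of its complement, because adjacency in $L(P)$ encodes global constraints that a greedy or purely local argument cannot control. I would therefore expect the induction to succeed on every graph admitting a small cyclic edge cut and to stall precisely on the cyclically $5$-edge-connected snarks, where a genuinely new idea beyond the reduction machinery above would be required.
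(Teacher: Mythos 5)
You have not proved the statement, and you could not be expected to: this is Jaeger's Petersen coloring conjecture itself, which the paper states as an open problem and does not prove. The paper's actual results (Theorems \ref{MainResult} and \ref{MainResult2}) go in a converse-type direction --- they show that $P$ is the \emph{only} connected bridgeless cubic graph $H$ that could possibly satisfy $H\prec G$ for all bridgeless cubic $G$ --- so there is no proof in the paper to compare yours against, and your proposal must be judged on its own as an attempted proof of the conjecture.

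As such, it has a genuine and decisive gap, which you yourself concede in the last paragraph. The parts you do carry out are correct but standard: since $P$ is triangle-free, the paper's introduction already notes that a $P$-coloring is the same as an adjacency-preserving map $E(G)\to E(P)$, i.e.\ a homomorphism $L(G)\to L(P)$; and for a $3$-edge-colorable $G$, sending the three color classes to the three edges $\partial_P(y)$ at a single vertex $y$ gives $f(\partial_G(x))=\partial_P(y)$ for every vertex $x$ of $G$, so $P\prec G$ holds directly from the definition (you do not even need the triangle-free converse there). But the entire content of the conjecture lies in the non-$3$-edge-colorable case, and your ``inductive reduction scheme'' is a plan, not an argument: you name no concrete reducible configurations, you do not prove that a $P$-coloring of the reduced graph $G'$ extends across the boundary (edge-transitivity of $\mathrm{Aut}(P)$ lets you normalize one boundary edge, but it does not control the joint adjacency constraints on several boundary edges simultaneously, which is exactly the global obstruction you mention), and you explicitly state that the method stalls on cyclically $5$-edge-connected snarks --- the only case that matters after the standard reductions. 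So the proposal establishes nothing beyond known easy observations, and the statement remains exactly as open as before.
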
 

\begin{figure}[h]
\begin{center}
\includegraphics [height=23pc]{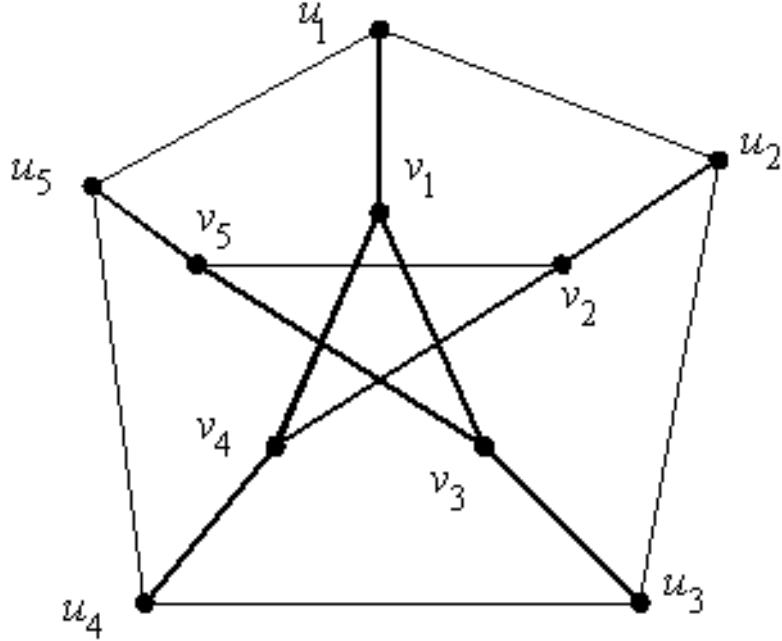}\\
\caption{The Petersen graph}\label{Petersen}
\end{center}
\end{figure}

\begin{figure}[h]
\begin{center}
\includegraphics [height=23pc]{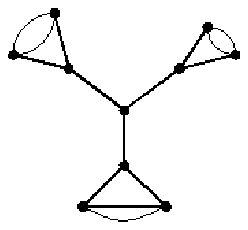}\\
\caption{The Sylvester graph}\label{Sylvester}
\end{center}
\end{figure}

The conjecture is difficult to prove, since it can be easily seen that it implies the following two classical conjectures:

\begin{conjecture} (Berge-Fulkerson, 1972 \cite{BergeFulkerson}) Any bridgeless cubic graph $G$ contains six (not necessarily distinct) perfect matchings $F_1,...,F_6$ such that any edge belongs to exactly two of them.
\end{conjecture}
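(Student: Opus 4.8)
The displayed statement is the classical Berge--Fulkerson conjecture, which is itself open; what the surrounding sentence actually asserts, and what I would establish here, is that it is a consequence of the Petersen coloring conjecture. So the target is: for an arbitrary bridgeless cubic graph $G$, produce six perfect matchings $F_1,\dots,F_6$ covering every edge exactly twice, using as engine a $P$-coloring of $G$. The whole argument consists of transporting a Fulkerson cover that $P$ already possesses back along such a coloring.

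First I would record the relevant feature of the Petersen graph. $P$ is $3$-regular with $15$ edges and has exactly six perfect matchings $M_1,\dots,M_6$. Since $6\cdot 5 = 30 = 2\cdot 15$ and $P$ is edge-transitive, each edge of $P$ lies in the same number of the $M_i$, hence in exactly two of them. Thus $P$ itself satisfies Berge--Fulkerson, witnessed by $M_1,\dots,M_6$. This is a finite verification and is the only external input needed.

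Next I would prove the pullback lemma: if $f\colon E(G)\to E(P)$ is a $P$-coloring and $M$ is a perfect matching of $P$, then $f^{-1}(M)$ is a perfect matching of $G$. Fix a vertex $x$ of $G$. By the definition of a $P$-coloring there is a vertex $y$ of $P$ with $f(\partial_G(x)) = \partial_P(y)$, and since $f$ is a proper edge-coloring its restriction to $\partial_G(x)$ is a bijection onto $\partial_P(y)$. As $M$ meets $\partial_P(y)$ in exactly one edge, $f^{-1}(M)$ meets $\partial_G(x)$ in exactly one edge; letting $x$ range over $V(G)$ shows that $f^{-1}(M)$ is a perfect matching of $G$. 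Setting $F_i := f^{-1}(M_i)$ then finishes the derivation: each $F_i$ is a perfect matching, and for any edge $e$ of $G$ we have $e\in F_i$ if and only if $f(e)\in M_i$, which holds for exactly two indices $i$ because $f(e)$ lies in exactly two of the $M_i$. Hence every edge of $G$ belongs to exactly two of $F_1,\dots,F_6$, as required.

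The point to flag is the true source of difficulty. The pullback step is routine once a $P$-coloring is in hand, which is why the implication is ``easily seen''; but supplying the map $f$ for an arbitrary bridgeless cubic $G$ is precisely the content of the Petersen coloring conjecture, which is open. All of the hardness is therefore pushed into the hypothesis $P\prec G$: the present argument is a reduction, not an unconditional resolution, and this is exactly why Berge--Fulkerson remains open while the Petersen coloring conjecture is the strictly stronger statement.
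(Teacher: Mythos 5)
Your reading of the situation is exactly right, and your argument is correct. The paper does not (and cannot) prove this statement: it is the open Berge--Fulkerson conjecture, stated only to motivate Jaeger's conjecture, with the implication ``Petersen coloring $\Rightarrow$ Berge--Fulkerson'' dismissed as ``easily seen'' and never written out. What you supply is precisely that omitted implication, and your key step --- that a perfect matching $M$ of $P$ pulls back under a $P$-coloring $f$ to a perfect matching $f^{-1}(M)$ of $G$ --- is literally part (C) of the paper's Lemma~\ref{PrecLemma}, proved there by the same observation (properness of $f$ plus $f(\partial_G(x))=\partial_P(y)$ forces $f$ to restrict to a bijection $\partial_G(x)\to\partial_P(y)$, so $f^{-1}(M)$ meets each $\partial_G(x)$ exactly once). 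The paper, however, only applies that lemma toward Theorems~\ref{MainResult} and~\ref{MainResult2}, never to Berge--Fulkerson, so your completion of the derivation is your own. Your finite verification on $P$ is also sound: $P$ has exactly six perfect matchings, the incidence count $6\cdot 5=30=2\cdot 15$ together with edge-transitivity gives that each edge lies in exactly two of them (and this can be checked directly edge by edge without invoking transitivity, if one prefers). Finally, you are right to flag that the argument is a conditional reduction, with all difficulty residing in the hypothesis $P\prec G$; that is exactly the role the statement plays in the paper.
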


\begin{conjecture} ($(5,2)$-cycle-cover conjecture, \cite{SeymourCDC,SzekeresCDC}) Any bridgeless graph $G$ (not necessarily cubic) contains five even subgraphs such that any edge belongs to exactly two of them.
\end{conjecture}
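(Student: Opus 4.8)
The plan is to exploit the \emph{direction} of the relation: by the paper's convention, $G \prec S$ asserts that $S$ (not $G$) admits a $G$-coloring $f\colon E(S)\to E(G)$, so $G$ is the palette while $S$ is the fixed ten-vertex host. The strategy is to recover from $f$ a map on vertices, show that it is a covering map $S\to G$ of multigraphs, and then pin down its number of sheets by a short counting argument, using one structural feature of the Sylvester graph at the very end.

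First I would define $\psi\colon V(S)\to V(G)$ by letting $\psi(y)$ be the vertex $x$ of $G$ with $f(\partial_S(y))=\partial_G(x)$; such an $x$ exists by the definition of a $G$-coloring. The only way $\psi$ could fail to be well defined is if two distinct vertices of $G$ had identical incident-edge sets, and since $G$ is loopless this forces three parallel edges between them, i.e.\ $G$ is the two-vertex graph consisting of a single triple edge. I would dispose of that possibility at once: a $G$-coloring in which $G$ is the triple-edge graph is literally a proper $3$-edge-coloring of $S$, which would make $S$ Class~1. Hence in the remaining (main) case $\psi$ is well defined.

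Next I would verify that $\psi$ is a covering map. Since $f$ is proper, its restriction to the three edges at $y$ is injective, hence a bijection onto $\partial_G(\psi(y))$, so $\psi$ is locally bijective on darts. Each edge $yy'$ of $S$ is sent to $f(yy')\in\partial_G(\psi(y))\cap\partial_G(\psi(y'))$, which (as $G$ has no loops) joins the distinct vertices $\psi(y)$ and $\psi(y')$, so $\psi$ is a homomorphism. Local bijectivity further shows that every edge of $G$ meeting the image of $\psi$ is itself an image, so the image is a union of connected components of $G$; connectivity of $G$ then gives surjectivity. Therefore $\psi$ is a $k$-sheeted covering with $|V(S)|=k\,|V(G)|$ and $|E(S)|=k\,|E(G)|$, that is $10=k\,|V(G)|$ and $15=k\,|E(G)|$.

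Finally I would read off $k$ arithmetically: the two equalities give $k\mid\gcd(10,15)=5$, so $k\in\{1,5\}$, and $k=5$ forces $|V(G)|=2$, again the triple-edge graph, which was already excluded. Hence $k=1$, $\psi$ is an isomorphism, and $G=S$. The one genuinely external input, and the step I expect to be the crux, is the fact that $S$ is \emph{not} $3$-edge-colorable: this is precisely where the presence of a bridge in the Sylvester graph is decisive, since a cubic graph with a bridge has a side of odd order that cannot be saturated by a perfect matching avoiding the bridge, so no proper $3$-edge-coloring exists. This structural feature of $S$ is exactly what eliminates the degenerate palette and lets the rigidity argument close; the technical heart of the write-up will be the careful verification that a $G$-coloring of $S$ really does induce a covering map.
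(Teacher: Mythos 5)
There is a fundamental mismatch here: the statement you were asked to prove is the $(5,2)$-cycle-cover conjecture --- that every bridgeless graph (not necessarily cubic) has five even subgraphs covering each edge exactly twice. This is a famous \emph{open} conjecture, attributed to Seymour and Szekeres; the paper does not prove it and could not, since no proof is known. It appears in the paper only as motivation: Jaeger's Petersen coloring conjecture is hard precisely because it implies this statement and the Berge--Fulkerson conjecture. Your proposal never mentions even subgraphs, edge double covers, or general bridgeless graphs at all; instead it attacks Theorem \ref{MainResult2} of the paper (if $G$ is connected cubic and $G\prec S$, then $G=S$), which is a different statement. So as a proof of the statement in question, the proposal is simply off target, and no local repair can fix that.

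Even judged as a blind proof of Theorem \ref{MainResult2}, the covering-map argument has a genuine gap. Your well-definedness step for $\psi\colon V(S)\to V(G)$ is fine once the triple edge is excluded, but the claim that $f(yy')$ ``joins the \emph{distinct} vertices $\psi(y)$ and $\psi(y')$'' because $G$ is loopless is a non sequitur: if $\psi(y)=\psi(y')=x$, then $f(yy')$ is merely an edge incident to $x$, not a loop, and properness of $f$ does not forbid this collapse. In fact $H$-colorings do \emph{not} induce coverings in general: color every edge of a $3$-edge-colorable cubic graph with the three edges of $\partial_H(y_0)$ for a single fixed vertex $y_0$ of $H$; this is a legitimate $H$-coloring in which $\psi$ is constant, every adjacent pair collapses, and your subsequent claims (image is a union of components, $|V(S)|=k\,|V(G)|$) all fail. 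The paper avoids this trap entirely: it never builds a vertex map, but instead shows by a case analysis on the special features of $S$ (the two multi-edges and the bridges) that every edge of $G$ adjacent to a used edge is itself used, deduces from connectivity that $|V(G)|\leq 10$, extracts from part (C) of Lemma \ref{PrecLemma} that $G$ has no perfect matching, and then invokes Proposition \ref{proposSylvester}, which identifies $S$ as the unique cubic graph on at most ten vertices without a perfect matching. Your closing observation that $S$ is not $3$-edge-colorable is true but is not the decisive input in the paper's argument --- the perfect-matching obstruction is.
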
 Recall that a subgraph $H$ of a graph $G$ is even, if any vertex $x$ of $G$ has even degree in $H$.

Related with Jaeger conjecture, we would like to introduce the following

\begin{conjecture} For each cubic graph $G$, one has $S\prec G$.
\end{conjecture}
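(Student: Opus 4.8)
The plan is to separate the two regimes \emph{$G$ bridgeless} and \emph{$G$ with a bridge}, and in both to exploit that the relation $\prec$ composes. Concretely, if $K$ is triangle-free and $K\prec H\prec G$, then composing a $K$-coloring $g$ of $H$ with an $H$-coloring $f$ of $G$ gives a map $g\circ f$ which, by the property recalled in the introduction, carries adjacent edges of $G$ to adjacent edges of $H$ and hence to adjacent edges of $K$; since $K$ has no triangle, the introduction criterion then shows that $g\circ f$ is itself a $K$-coloring, so $K\prec G$. As $S$ is triangle-free, this composition principle will be the engine throughout, with $S$ playing the role of $K$.

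For bridgeless $G$ I would reduce to Jaeger's conjecture by first establishing the single instance $S\prec P$, i.e. that the Petersen graph admits a Sylvester coloring. This is a bounded verification: one exhibits a map $E(P)\to E(S)$ sending adjacent edges to adjacent edges. Granting it, Jaeger's conjecture furnishes $P\prec G$ for every bridgeless cubic $G$, and transitivity yields $S\prec P\prec G$, hence $S\prec G$. Thus the bridgeless case would follow from Jaeger's conjecture together with the finite fact $S\prec P$.

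For graphs with a bridge the essential point is to locate the features of $S$ that make the conjecture possible at all. If $f$ is a Sylvester coloring of a cubic $G$ and $M$ is any perfect matching of $S$, then $f$ restricted to each palette $\partial_G(x)$ is a bijection onto some $\partial_S(y)$, so $f^{-1}(M)$ meets each $\partial_G(x)$ in exactly one edge and is therefore a perfect matching of $G$; since every perfect matching of a cubic graph contains all its bridges, a bridge $e$ of $G$ satisfies $f(e)\in M$ for \emph{every} perfect matching $M$ of $S$. Hence $S$ must contain an edge lying in all of its perfect matchings (such as a bridge), and bridges of $G$ are forced onto such an edge. A similar inspection of the vertex condition at a digon, using that the two edges of $u$ in the minimal bridged cubic graph cannot receive the same color, shows that $S$ also contains a pair of parallel edges. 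With these features in hand I would decompose $G$ along its bridges into $2$-edge-connected pieces arranged in a tree, complete each piece to a bridgeless cubic graph, colour it by the bridgeless case, and reassemble by sending each bridge of $G$ to the corresponding edge of $S$.

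The hard part is twofold. First, by the reduction above the bridgeless case is at least as hard as Jaeger's conjecture, which is open, so an unconditional proof is out of reach and one can only record the implication. Second, the reassembly is not automatic: the coloring of each piece prescribes fixed palettes $\partial_S(y_1),\partial_S(y_2)$ at the two endpoints of the edge of $S$ receiving the bridge, and one must show that the completed pieces can always be $S$-coloured with these prescribed boundary palettes so that neighbouring pieces agree across each bridge. Proving this boundary-compatible strengthening, which depends on the exact local structure of the Sylvester graph rather than on abstract properties alone, is the main obstacle.
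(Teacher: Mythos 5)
There is a fundamental status mismatch here: the statement you set out to prove is the Sylvester coloring conjecture itself, which the paper \emph{introduces as an open conjecture} and nowhere proves --- its theorems concern the opposite question, namely that $P$ and $S$ are the unique connected graphs that could play these universal roles. Your own write-up concedes the same: the bridgeless case is reduced to Jaeger's conjecture (open), and the bridged case ends at an unestablished ``boundary-compatible'' reassembly. So what you have is a conditional research program, not a proof, and the bridged case that you leave open is precisely the heart of the matter: $S$ enters the picture only because cubic graphs with bridges can lack perfect matchings. On that point your sketch contains a real error: $S$ has \emph{no perfect matching at all} (this is the content of Proposition \ref{proposSylvester}, which characterizes $S$ as the unique cubic graph on at most ten vertices without one), so your invariant ``$f(e)\in M$ for every perfect matching $M$ of $S$'' quantifies over an empty family and is vacuous. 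Your own Lemma-(C)-style argument, applied to any $G$ without a perfect matching, actually proves the stronger correct statement that any universal target for all cubic graphs must itself have no perfect matching; you stopped short of this and instead attributed to $S$ a structure (perfect matchings all through one edge) that it does not have.

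A second factual error: $S$ is not triangle-free. As Cases 1--3 of the paper's proof of Theorem \ref{MainResult2} make explicit, $S$ consists of a central vertex joined by three bridges to three triangles, each with one doubled edge; so the hypothesis of your composition engine fails for $K=S$. Fortunately the engine is unnecessary: $\prec$ is transitive with no hypothesis whatsoever, since if $f$ is an $H$-coloring of $G$ and $g$ a $K$-coloring of $H$, then at each vertex $x$ of $G$ the map $f$ restricts to a bijection of $\partial_G(x)$ onto some $\partial_H(y)$ and $g$ to a bijection of $\partial_H(y)$ onto some $\partial_K(z)$, so $g\circ f$ is a $K$-coloring outright. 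Moreover your ``bounded verification'' $S\prec P$ does succeed: with outer cycle $u_1\ldots u_5$, spokes $u_iv_i$ and inner edges $v_iv_{i+2}$, let $z$ be a bridge of $S$ and $a,b$ the simple edges, $c,d$ the doubled edge of the triangle it leads to; send $u_4u_5\mapsto z$; $u_3u_4,\,u_5u_1,\,u_2v_2\mapsto a$; $u_4v_4,\,u_5v_5,\,v_1v_3\mapsto b$; and alternate $c,d$ around the $8$-cycle $u_1u_2u_3v_3v_5v_2v_4v_1$. One checks every vertex then sees a palette $\{z,a,b\}$, $\{a,c,d\}$ or $\{b,c,d\}$, i.e.\ $\partial_S(y)$ for a vertex $y$ of the triangle block. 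So the one piece of your proposal that survives is a genuine (and unclaimed-by-the-paper) conditional remark: Jaeger's conjecture implies the Sylvester conjecture for bridgeless $G$. Everything beyond that --- in particular the decomposition along bridges with prescribed palettes at both ends of each bridge --- is not established, and as you yourself note there is no argument that the completed $2$-edge-connected pieces admit $S$-colorings with the required boundary data; the conjecture remains open.
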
In the direct analogy with Jaeger conjecture, we call this Sylvester coloring conjecture.

One may wonder whether there are other ($\neq P$) bridgeless cubic graphs $H$, such that for any bridgeless cubic graph $G$ one has $H\prec G$? Similarly, we can look for other ($\neq S$)  cubic graphs $H$, such that for any cubic graph $G$ one has $H\prec G$. It is easy to see that there are infinitely many disconnected bridgeless cubic graphs $H$ meeting this condition provided that Jaeger conjecture is true (hint: take any disconnected bridgeless cubic graph $H$, which contains a connected component that is isomophic to the Petersen graph). A similar construction works with Sylvester graph. Thus it is natural to re-state these questions as follows:

\begin{problem} Are there other ($\neq P$) connected bridgeless cubic graphs $H$, such that for any bridgeless cubic graph $G$ one has $H\prec G$ provided that Jaeger conjecture is true?
\end{problem}

\begin{problem} Are there other ($\neq S$) connected cubic graphs $H$, such that for any cubic graph $G$ one has $H\prec G$ provided that Sylvester coloring conjecture is true?
\end{problem}

It is easy to see that the theorems \ref{MainResult} and \ref{MainResult2} proved below imply that the answers to these problems are negative.

Non-defined terms and concepts can be found in \cite{BondyMurty}.

\section{The main results}

For the proof of the main results, we will need the following:

\begin{proposition}\label{proposPetersen} Let $G$ be a non 3-edge-colorable bridgeless cubic graph that has at most ten vertices. Then $G=P$.
\end{proposition}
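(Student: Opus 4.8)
The plan is to show that a smallest non-$3$-edge-colorable bridgeless cubic graph on at most ten vertices must have girth $5$, and then to invoke the fact that the Petersen graph is the unique cubic graph of girth $5$ on ten vertices. Since every cubic graph has an even number of vertices, we have $n\in\{2,4,6,8,10\}$, and I would take $G$ to be such a counterexample with the fewest vertices and derive a contradiction unless $G=P$. Throughout I use that, by Vizing's theorem, a cubic graph is non-$3$-edge-colorable precisely when its chromatic index equals $4$.

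First I would eliminate short cycles by reductions, each of which produces a smaller bridgeless cubic graph $G'$ with the property that a proper $3$-edge-coloring of $G'$ lifts to one of $G$. If $G$ has a multiple edge between $u$ and $v$, I delete $u,v$ and the parallel edges and join their two remaining neighbours by a new edge; bridgelessness forces these neighbours to be distinct, and any $3$-coloring of $G'$ extends by giving the digon the two colors missing at the joined edge. If $G$ has a triangle, I contract it to a single vertex, so that the three edges leaving the triangle become the three edges at the new vertex; $2$-edge-connectivity is preserved and colorings lift. In each case, if $G$ were non-$3$-colorable then so would be the smaller $G'$, contradicting minimality. Hence a smallest counterexample is simple and triangle-free, so it has girth at least $4$.

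Next I would push the girth to at least $5$ by a $4$-cycle reduction: given a $4$-cycle $v_1v_2v_3v_4$ with external neighbours $u_1,u_2,u_3,u_4$, one checks that in any $3$-edge-coloring the four external edges either all receive the same color or split into two equally-colored adjacent pairs, so deleting the cycle and reconnecting the $u_i$ by one of the two matchings $\{u_1u_2,u_3u_4\}$ or $\{u_1u_4,u_2u_3\}$ yields a smaller graph whose colorings again lift to $G$. Choosing the reconnection that keeps the graph simple and bridgeless contradicts minimality once more. Therefore a smallest counterexample has girth at least $5$; by the Moore bound a cubic graph of girth $5$ has at least $1+3+3\cdot 2=10$ vertices, so $n=10$ and $G$ is a $(3,5)$-cage, which is known to be unique and equal to $P$. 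A short parity argument on the outer pentagon of $P$ shows that $P$ itself is not $3$-edge-colorable, so $P$ is indeed the sought graph.

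The step I expect to be the main obstacle is the $4$-cycle reduction: unlike the digon and triangle cases there are two candidate reconnections, and one must argue that at least one of them simultaneously avoids creating a loop or a new multiple edge, preserves bridgelessness, and transmits $3$-colorability in the right direction. If making this argument uniform across all configurations proves awkward, an alternative is to bypass the $4$-cycle reduction and finish by direct inspection: once reduced to the simple, triangle-free case, the bridgeless cubic graphs on at most ten vertices are few enough that the girth-$4$ possibilities can be examined by hand and confirmed to be $3$-edge-colorable, leaving only the girth-$5$ graph $P$.
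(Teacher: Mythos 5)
The paper offers no proof of Proposition~\ref{proposPetersen} at all: it is stated as a known fact (the classical result that the Petersen graph is the unique snark on at most ten vertices) and then simply invoked in Theorem~\ref{MainResult}. So there is nothing in the paper to compare against; your proposal in effect reconstructs the standard folklore proof, and it is essentially sound. The digon and triangle reductions are correct as you describe them (one degenerate case worth a sentence: if the two remaining neighbours in the digon reduction coincide, the third edge at that vertex is a bridge, except for the two-vertex triple edge, which is $3$-edge-colorable and hence not a counterexample). Your parity analysis of the $4$-cycle is also right: the four external edges are either monochromatic or form two like-colored adjacent pairs, and a $3$-coloring of either non-crossing reconnection lifts back to $G$.

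Two remarks that sharpen the step you flagged as the main obstacle. First, you do not need the reconnection to keep the graph \emph{simple}: the paper explicitly allows multiple edges, and the digon reduction already handles them, so a parallel edge in $G'$ is harmless; the only property you must preserve is bridgelessness, and this matters because a cubic graph with a bridge is never $3$-edge-colorable (a parity count on either side of the bridge shows two color classes would have to be perfect matchings avoiding it), so a bridged $G'$ transmits no coloring back to $G$. Second, one can in fact prove that at least one of the two non-crossing reconnections is bridgeless when the external neighbours are distinct --- if a bridge $b$ of $G_1$ separates $u_1,u_2$ from $u_3,u_4$ in $G-C_4$, then the crossing edges $u_1u_4$, $u_2u_3$ of $G_2$ reconnect the two sides, and a short case analysis rules out the new edges being bridges in both graphs --- with the coincidence cases ($u_1=u_3$ or $u_2=u_4$) few enough to enumerate. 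But your fallback is entirely legitimate at this size: after the digon and triangle reductions one only needs that the simple triangle-free bridgeless cubic graphs of girth $4$ on at most ten vertices ($K_{3,3}$, the cube, the Wagner graph, and a handful on ten vertices) are all $3$-edge-colorable, which is a finite hand check. Two trivia: the proposition does not assume connectivity, but your minimal-counterexample frame absorbs this, since a disconnected counterexample has a non-$3$-edge-colorable component on at most eight vertices; and the closing argument that $P$ itself is class two, while true, is not needed for the implication as stated.
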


\begin{proposition}\label{proposSylvester} Let $G$ be a cubic graph that has no a perfect matching and has at most ten vertices. Then $G=S$.
\end{proposition}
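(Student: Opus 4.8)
The plan is to apply Tutte's $1$-factor theorem. Since $G$ has no perfect matching, there is a set $S\subseteq V(G)$ with $o(G-S)>|S|$, where $o(H)$ is the number of odd-order components of a graph $H$. Put $s=|S|$ and let $C_1,\dots,C_k$ be the odd-order components of $G-S$, so $k=o(G-S)$. Two parity remarks organize the argument. First, every component of a cubic graph has even order, so $o(G)=0$; hence the Tutte set $S$ is nonempty, $s\ge 1$. Second, $|V(G)|$ is even, so $k\equiv |V(G)|-s\equiv s\pmod 2$, and together with $k>s$ this forces $k\ge s+2$.

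Next I would count the edges between $S$ and the odd components. For each $i$, let $m_i$ be the number of edges joining $C_i$ to $S$; since $|C_i|$ is odd while the internal degree sum $3|C_i|-m_i$ (twice the number of internal edges) is even, $m_i$ is odd, so $m_i\ge 1$. As $S$ has total degree $3s$, we get $\sum_i m_i\le 3s$. I would then bound the orders $|C_i|$ from below. A component with $|C_i|=1$ is a single vertex whose three edges all go to $S$, so it has $m_i=3$; and connectivity gives in general $m_i=3|C_i|-2(\text{internal edges})\le 3|C_i|-2(|C_i|-1)=|C_i|+2$, so any component with $m_i=1$ has $|C_i|\ge 3$. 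Let $a$ be the number of components of order $1$ (each with $m_i=3$) and $b$ the number of order $\ge 3$. Then $a+b=k\ge s+2$, while $\sum_i m_i\ge 3a+b$ yields $3a+b\le 3s$.

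The core is the resulting optimization. Subtracting the two relations gives $2a\le 2s-2$, so $a\le s-1$; and since the even-order components of $G-S$ only add vertices,
\[
|V(G)|\ \ge\ s+a+3b\ \ge\ s+a+3\bigl((s+2)-a\bigr)\ =\ 4s+6-2a\ \ge\ 2s+8\ \ge\ 10 .
\]
As $|V(G)|\le 10$, every inequality is an equality: $s=1$, $a=0$, $b=3$, there are no even components, and each of the three odd components has order exactly $3$ and sends exactly one edge to the unique vertex $v$ of $S$. Finally I would show such an order-$3$ component is unique: on vertices $u,p,q$ with internal degrees $2,3,3$, the edge multiplicities $\alpha,\beta,\gamma$ of $up,uq,pq$ satisfy $\alpha+\beta=2$, $\alpha+\gamma=3$, $\beta+\gamma=3$, whence $\alpha=\beta=1$, $\gamma=2$: a ``triangle with a doubled edge'' in which $u$ carries the lone edge to $v$. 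Hence $G$ is precisely $v$ joined to three such gadgets, i.e.\ $G=S$; conversely deleting $v$ from $S$ produces three odd components, so $S$ has no perfect matching, as required.

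I expect the main obstacle to be keeping the lower bounds on $|C_i|$ and $m_i$ honest for multigraphs and confirming that the tightness of all inequalities really does pin down the graph: one must check that exactly one edge leaves each order-$3$ gadget (so the double edge lies inside it, not between a gadget and $v$), that no two gadgets are joined directly, and that the constraint $\sum_i m_i\le 3s$ combined with $k\ge s+2$ genuinely excludes components sending $m_i\ge 5$ edges to $S$ — all of which follow from the equality analysis above but deserve careful verification.
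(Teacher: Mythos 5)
Your proposal is correct, but there is nothing in the paper to compare it against: the paper states Proposition \ref{proposSylvester} without any proof (both propositions are simply asserted before the main theorems), so your argument actually supplies a missing piece rather than an alternative route. The Tutte-theorem counting checks out in every detail, including the multigraph subtleties: the parity facts ($s\geq 1$, $k\geq s+2$, each $m_i$ odd), the bound $\sum_i m_i\leq 3s$, the estimate $m_i\leq |C_i|+2$ via connectedness (valid for multigraphs), and the chain $|V(G)|\geq 4s+6-2a\geq 2s+8\geq 10$ whose tightness forces $s=1$, $a=0$, $k=b=3$, no even components, $|C_i|=3$ and $m_i=1$ for each $i$. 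The degree-system computation $\alpha=\beta=1$, $\gamma=2$ correctly pins down each order-$3$ component as a triangle with one doubled edge, attached to the central vertex by a single bridge at the degree-$2$ vertex, and this is exactly the Sylvester graph of the paper (it has the $15$ edges, three bridges, and doubled edges in triangles that the proof of Theorem \ref{MainResult2} relies on). Two of the ``obstacles'' you flag at the end are in fact automatic: distinct components of $G-S$ have no edges between them by definition, so no check that gadgets are not directly joined is needed, and components with $m_i\geq 5$ are excluded outright since $\sum_i m_i\leq 3s=3$. One cosmetic point: you use $S$ both for the Tutte set and for the Sylvester graph; since the proposition's statement already reserves $S$ for the Sylvester graph, rename the Tutte set (say $T$) to avoid a genuine notational clash.
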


\begin{lemma}\label{PrecLemma} Suppose that $G$ and $H$ are cubic graphs with $H\prec G$, and let $f$ be an $H$-coloring of $G$. Then:
\begin{itemize}
	\item [(A)] If $M$ is any matching of $H$, then $f^{-1}(M)$ is a matching of $G$;
	\item [(B)] $\chi'(G)\leq \chi'(H)$, where $\chi'(G)$ is the chromatic index of $G$;
	\item [(C)] If $M$ is any perfect matching of $H$, then $f^{-1}(M)$ is a perfect matching of $G$.	
\end{itemize}
\end{lemma}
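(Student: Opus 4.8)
The plan is to establish (A) first and then obtain (B) and (C) as consequences, the whole argument resting on one structural observation: for any vertex $x$ of $G$, the restriction of $f$ to the three edges of $\partial_G(x)$ is a bijection onto the three edges of $\partial_H(y)$, where $y$ is the vertex with $f(\partial_G(x))=\partial_H(y)$. This local bijectivity holds because $f$ is a proper edge-coloring, so the edges of $\partial_G(x)$ receive three distinct values, and these three values exhaust $\partial_H(y)$. For (A), I would argue by contradiction: suppose two edges $e,e'$ of $f^{-1}(M)$ meet at a vertex $x$ of $G$, and let $y$ satisfy $f(\partial_G(x))=\partial_H(y)$. Since $f$ is proper we have $f(e)\neq f(e')$, while both $f(e)$ and $f(e')$ lie in $\partial_H(y)$; hence they are two distinct edges of $H$ incident to the common vertex $y$. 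But both belong to $M$, contradicting that $M$ is a matching. Thus no two edges of $f^{-1}(M)$ are adjacent, so $f^{-1}(M)$ is a matching of $G$.

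For (C), I would fix an arbitrary vertex $x$ of $G$ together with the vertex $y$ of $H$ for which $f(\partial_G(x))=\partial_H(y)$. Because $M$ is a perfect matching of $H$, exactly one edge $m\in\partial_H(y)$ belongs to $M$. By the local bijectivity noted above, there is a (unique) edge $e\in\partial_G(x)$ with $f(e)=m$, so $e\in f^{-1}(M)$ and $x$ is covered by $f^{-1}(M)$. Since $x$ was arbitrary and $f^{-1}(M)$ is already a matching by part (A), it covers every vertex of $G$ and is therefore a perfect matching.

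For (B), I would start from an optimal proper edge-coloring of $H$, that is, a partition $E(H)=M_1\cup\cdots\cup M_k$ into $k=\chi'(H)$ matchings. Every edge of $G$ is sent by $f$ to exactly one edge of $H$, which lies in exactly one $M_i$; hence the sets $f^{-1}(M_1),\dots,f^{-1}(M_k)$ partition $E(G)$, and each of them is a matching of $G$ by part (A). This partition is a proper edge-coloring of $G$ using at most $k$ colors, whence $\chi'(G)\leq k=\chi'(H)$.

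I do not expect any genuine obstacle here: the sole point demanding care is the verification that $f$ acts bijectively on $\partial_G(x)$, from which (A) follows immediately, and (B) and (C) are then routine bookkeeping about, respectively, pulling back a partition of $E(H)$ and checking vertex coverage.
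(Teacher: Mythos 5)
Your proposal is correct and follows essentially the same route as the paper: part (A) by the adjacency/properness observation (which the paper dismisses as trivial), part (B) by pulling back an optimal matching decomposition of $E(H)$ through $f$, and part (C) by noting that every vertex $x$ of $G$ has some edge of $\partial_G(x)$ mapped into $M$ since $f(\partial_G(x))=\partial_H(y)$ meets the perfect matching $M$. Your explicit remark that $f$ restricts to a bijection from $\partial_G(x)$ onto $\partial_H(y)$ is a slightly more detailed formulation of what the paper uses implicitly, but the substance is identical.
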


\begin{proof}(A) is trivial. 

(B) Let $\chi'(H)=l$. Then 
\begin{equation*}
E(H)=M_1\cup... \cup M_{l},
\end{equation*}where for $i=1,...,l$ $M_i$ is a matching. This implies that:
\begin{equation*}
E(G)=f^{-1}(M_1)\cup... \cup f^{-1}(M_l).
\end{equation*} Now, by (A), we have that for $i=1,...,l$ $f^{-1}(M_i)$ is a matching. Thus, $G$ is $l$-edge-colorable.

(C) By (A), we have that $f^{-1}(M)$ is a matching. Let $v$ be any vertex of $G$. Since $M$ is a perfect matching of $H$, we have $f(\partial_G(v))\cap M\neq \emptyset$. Thus $f^{-1}(M)$ is a perfect matching. $\square$
\end{proof}

We are ready to prove:

\begin{theorem}\label{MainResult} If $G$ is a connected bridgeless cubic graph with $G\prec P$, then $G=P$.
\end{theorem}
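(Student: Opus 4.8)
The plan is to verify the three hypotheses of Proposition~\ref{proposPetersen} for $G$ and then invoke it. Cubicity and bridgelessness are part of the hypothesis, so the first task is to record that $G$ is not $3$-edge-colourable. This is immediate from Lemma~\ref{PrecLemma}(B): since $G\prec P$, that lemma gives $\chi'(P)\le\chi'(G)$, and as $\chi'(P)=4$ while $G$ is cubic (so $\chi'(G)\le 4$), we get $\chi'(G)=4$. Thus $G$ is a bridgeless cubic class-two graph, and everything reduces to the order bound $|V(G)|\le 10=|V(P)|$; with it in hand Proposition~\ref{proposPetersen} forces $G=P$.

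To bound the order I would work with a fixed $G$-colouring $f\colon E(P)\to E(G)$ and the induced vertex map $\phi\colon V(P)\to V(G)$ that sends $x$ to the unique $y$ with $f(\partial_P(x))=\partial_G(y)$. (This $y$ is unique: the only cubic graph in which two vertices have the same set of incident edges is the two-vertex graph with a triple edge, which is $3$-edge-colourable and hence already excluded.) For an edge $e=xx_1$ of $P$, properness gives $f(e)\in\partial_G(\phi(x))\cap\partial_G(\phi(x_1))$, so $\phi(x)$ and $\phi(x_1)$ are equal or are joined in $G$ by $f(e)$; projecting paths then shows that the image $U:=\phi(V(P))$ spans a connected subgraph of $G$, and that $f(E(P))=\bigcup_{y\in U}\partial_G(y)$ consists exactly of the edges meeting $U$. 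The order bound would follow from surjectivity of $f$ onto $E(G)$: equivalently, $V(G)\setminus U$ is independent, whence $|E(G)|=|f(E(P))|\le|E(P)|=15$ and $|V(G)|\le 10$.

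The surjectivity of $f$ is the crux. Suppose some edge of $G$ is unused. The idea I would pursue is to collapse the colours onto three classes: find $\pi\colon E(G)\to\{1,2,3\}$ whose restriction to $\partial_G(y)$ is injective for every $y\in U$ (edges not meeting $U$ impose no constraint), so that $\pi\circ f$ is a \emph{proper} $3$-edge-colouring of $P$, contradicting $\chi'(P)=4$. Constructing such a $\pi$ is the same as properly $3$-edge-colouring the subgraph $G[U]$ spanned by the image, since the boundary edges to $V(G)\setminus U$ can absorb whatever colour is left over at each vertex of $U$ without interfering with one another. Hence the argument goes through unless $G[U]$ is itself class two; because $G\prec P$ forces the fibres $\phi^{-1}(y)$ to have even size once $V(G)\setminus U\neq\emptyset$, this leaves only a short list of small ``overfull'' multigraphs $G[U]$ on at most five vertices to consider.

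The main obstacle, accordingly, is exactly this last point: eliminating the small class-two candidates for $G[U]$, i.e.\ ruling out that the entire colouring concentrates on a proper induced subgraph $U\subsetneq V(G)$. I expect to dispose of these finitely many configurations by hand, using connectivity and bridgelessness of $G$ together with the way the even fibres $\phi^{-1}(y)$ must embed into the (girth-five, $3$-connected) Petersen graph. Once $f$ is known to be onto, $|V(G)|\le 10$ follows at once and Proposition~\ref{proposPetersen} completes the proof.
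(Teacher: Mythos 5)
Your overall skeleton agrees with the paper's: obtain $\chi'(G)=4$ from Lemma \ref{PrecLemma}(B), show that the $G$-coloring $f\colon E(P)\to E(G)$ uses every edge of $G$, deduce $|E(G)|\le |E(P)|=15$ and hence $|V(G)|\le 10$, and finish by Proposition \ref{proposPetersen}. Your route to surjectivity, however, is genuinely different from the paper's: the paper proves a \emph{local} propagation statement (if an edge of $G$ is used, then every edge adjacent to it is used) by a three-case analysis inside the Petersen graph exploiting its symmetry and the bridgelessness of $G$, and then invokes connectivity of $G$; you instead attempt a \emph{global} argument, collapsing the colours via a map $\pi\colon E(G)\to\{1,2,3\}$ to manufacture a proper $3$-edge-colouring of $P$ unless the image subgraph $G[U]$ is class two. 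That reduction itself is sound: properness of $\pi\circ f$ only requires injectivity of $\pi$ on $\partial_G(y)$ for $y\in U$, and a proper $3$-edge-colouring of $G[U]$ does extend greedily over the boundary edges.

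But the proposal has a genuine gap at exactly the decisive point, in two places. First, your claim that all fibres $\phi^{-1}(y)$ have even size once $V(G)\setminus U\neq\emptyset$ is asserted with no argument. It is in fact provable: for an edge $e=yy'$ with $y,y'\in U$, the preimage $f^{-1}(e)$ is a perfect matching of $\phi^{-1}(y)\cup\phi^{-1}(y')$ (every vertex in either fibre has exactly one incident edge mapped to $e$, and both endpoints of any edge in $f^{-1}(e)$ lie in these fibres), so adjacent fibres have equal parity; since the image is connected and a boundary edge $e=yw$, $w\notin U$, has $f^{-1}(e)$ a perfect matching of the single fibre $\phi^{-1}(y)$, one boundary edge forces all fibres even. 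You need to supply this, because your bound $|U|\le 5$ rests entirely on it. Second, and more seriously, the elimination of the class-two candidates for $G[U]$ --- the only place where bridgelessness and the structure of $P$ actually enter --- is wholly deferred (``I expect to dispose of these finitely many configurations by hand''). These candidates are not hypothetical: the triangle with one doubled edge, $K_4$ with a subdivided edge, and the $7$-edge subcubic multigraphs on five vertices are all class two, and for each one must rule out every compatible partition of $V(P)$ into even fibres whose pairwise unions carry perfect matchings (hence, for doubled edges, $2$-factors of even cycles in a girth-five graph). That finite analysis is the entire mathematical content of the theorem; the paper's three-case argument is precisely the worked-out substitute for it. Until you carry it out, what you have is a plausible and arguably elegant plan, not a proof.
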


\begin{proof} By (B) of lemma \ref{PrecLemma} $G$ is non 3-edge-colorable. Let $f$ be a $G$-coloring of $P$. If $e\in E(G)$, then we will say that $e$ is used (with respect to $f$), if $f^{-1}(e)\neq \emptyset$.

First of all, let us show that if an edge $e$ of $G$ is used, then any edge adjacent to $e$, is also used.

So let $e=uv$ be a used edge of $G$. For the sake of contradiction, assume that $v$ is incident to an edge $z\in E(G)$ that is not used. Suppose that $\partial_G(u)=\{a,b,e\}$. Observe that $a$ and $b$ are also used.

Due to symmetry of Petersen graph, we can assume that $f(u_3u_4)=e$. Suppose that $f(u_4u_5)=a$ and $f(u_4v_4)=b$ (figure \ref{Petersen}). Since the edge $z$ is not used, we have: $f(\partial_P(u_3))=\partial_G(u)=\{a,b,e\}$. Again, due to symmetry of Petersen graph, we can assume that $f(u_3v_3)=b$ and $f(u_2u_3)=a$.

Let $a_1=f(u_1u_5)$, $a_2=f(u_1u_2)$. Observe that since $f$ is a $G$-coloring of $P$, we have that $a_1$ and $a_2$ are adjacent edges of $G$. Moreover, each of them is adjacent to $a$. Similarly, the edges $b_1=f(v_1v_4)$ and $b_2=f(v_1v_3)$ of $G$ are adjacent, and each of them is adjacent to $b$.

We will differ three cases:\\

Case 1: The edges $a_1$, $a_2$ and $a$ do not form a triangle in $G$.\\

Observe that in this case $f(u_1v_1)=a$. This implies that the edges $a$, $b_1$, $b_2$  must be incident to the same vertex. However, this is possible only when $b_1$ and $b_2$ are two parallel edges connecting the other ($\neq u$) end-vertices of edges $a$ and $b$, which is a contradiction, since $e$ cannot be a bridge.\\

Case 2: The edges $b_1$, $b_2$ and $b$ do not form a triangle in $G$.\\

This case is similar to case 1.\\

Case 3: The edges $a_1$, $a_2$ and $a$ form a triangle in $G$. Similarly, $b_1$, $b_2$ and $b$ form a triangle. \\

Let $a_3$ be the edge of $G$ that is adjacent to $a_1$, $a_2$ and is not adjacent to $a$. Note that such an edge exists since $G$ is bridgeless. Similarly, let $b_3$ be the edge of $G$ that is adjacent to $b_1$, $b_2$ and is not adjacent to $b$. Observe that $a_3=f(u_1,v_1)=b_3$, and hence $a_3=b_3$. On the other hand, since $G$ is bridgeless and $a\neq b$, we have $a_3\neq b_3$, which is a contradiction.\\

We are ready to complete the proof of theorem \ref{MainResult}. Observe that since $G$ is connected, we have that all edges of $G$ are used, and hence $|E(G)|\leq |E(P)|$, or $|V(G)|\leq |V(P)|=10$. Proposition \ref{proposPetersen} implies that $G=P$.

$\square$
\end{proof}

\begin{theorem}\label{MainResult2} Let $G$ be any connected cubic graph with $G\prec S$. Then $G=S$.
\end{theorem}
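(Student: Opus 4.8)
The plan is to mirror the structure of the proof of Theorem~\ref{MainResult}, replacing the tool from Proposition~\ref{proposPetersen} (non-$3$-edge-colorability) with the tool from Proposition~\ref{proposSylvester} (absence of a perfect matching). So the first step is to argue that $G$ has no perfect matching. For this I would invoke part (C) of Lemma~\ref{PrecLemma}: if $G$ had a perfect matching, I would want to transport it back along the coloring, but the coloring runs $S\prec G$, i.e. $f$ is a $G$-coloring of $S$, so the roles are reversed. Concretely, since $G\prec S$ means $f$ is a $G$-coloring of $S$, part (C) applied with $S$ in the role of the ``colored'' graph says that any perfect matching $M$ of $G$ pulls back to a perfect matching $f^{-1}(M)$ of $S$. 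Hence if $G$ has a perfect matching then so does $S$; thus I must check that the Sylvester graph $S$ itself has no perfect matching, which forces $G$ to have none either. (Indeed $S$ is the standard example of a cubic graph without a perfect matching, which is exactly why Proposition~\ref{proposSylvester} singles it out.)

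The second step is the ``spreading of used edges'' argument, which I expect to run almost verbatim as in Theorem~\ref{MainResult}. Define an edge $e\in E(G)$ to be \emph{used} if $f^{-1}(e)\neq\emptyset$, and show that if $e=uv$ is used then every edge adjacent to $e$ is used as well. The mechanism is the same: a used edge at $u$ forces, via the local structure of $S$ around a vertex mapped to $e$ and the adjacency-preserving property of a $G$-coloring, that the other two edges at $u$ are images of edges incident to the preimage, hence are used; one then propagates this. Here I would have to redo the case analysis using the specific local picture of the Sylvester graph near the relevant vertices rather than the Petersen graph, but structurally it is the same dichotomy (triangle versus non-triangle configurations of the induced adjacent edges).

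The third step is the counting/conclusion, identical in spirit to Theorem~\ref{MainResult}. Since $G$ is connected, the ``used'' relation spreads to all of $E(G)$, so every edge of $G$ is used. This gives $|E(G)|\leq|E(S)|$, hence $|V(G)|\leq|V(S)|=10$. Combining with the first step ($G$ has no perfect matching) and Proposition~\ref{proposSylvester}, which says that a cubic graph on at most ten vertices with no perfect matching must be $S$, we conclude $G=S$.

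The main obstacle I anticipate is the second step, and specifically whether the clean triangle/non-triangle case analysis from Theorem~\ref{MainResult} survives the switch to $S$. The Petersen argument exploited that $P$ is triangle-free and vertex/edge-transitive, which let the author fix $f(u_3u_4)=e$ ``due to symmetry'' and rule out the bad cases using bridgelessness. The Sylvester graph is far less symmetric and does contain short cycles, so I cannot appeal to a single symmetric local picture; I would instead have to verify the spreading property by examining the (finitely many) local configurations of $S$ directly, and the bridge-based contradictions of Theorem~\ref{MainResult} are unavailable because here $G$ need not be bridgeless. Making the spreading step go through without the bridgeless hypothesis — presumably by leaning on the perfect-matching obstruction established in step one rather than on bridges — is the delicate point I would need to get right.
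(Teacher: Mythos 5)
Your steps one and three coincide exactly with the paper's proof, and both are correct: since $G\prec S$ means $f$ is a $G$-coloring of $S$ (so $f\colon E(S)\to E(G)$), part (C) of Lemma~\ref{PrecLemma} with the roles as you describe pulls any perfect matching of $G$ back to one of $S$, and $S$ has none (delete the central vertex of $S$ and three odd components remain), so $G$ has none; and the counting step, $|E(G)|\le|E(S)|=15$, hence $|V(G)|\le 10$, followed by Proposition~\ref{proposSylvester}, is verbatim the paper's conclusion. You even make explicit a point the paper leaves implicit, namely that step one rests on $S$ itself having no perfect matching.

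The genuine gap is step two, which is the actual content of the theorem, and your forecast of how it should go points in the wrong direction. The spreading property for $S$ is not ``structurally the same dichotomy'' (triangle versus non-triangle) as in Theorem~\ref{MainResult}, nor does it lean on the perfect-matching obstruction from step one. What the paper exploits is that $S$, unlike $P$, has parallel edges and bridges, and every edge of $S$ is of one of three types; given a used edge $a=uv$ with a non-used edge at $v$, it cases on an edge $e\in E(S)$ with $f(e)=a$: (1) $e$ has multiplicity two, (2) $e$ is adjacent to a doubled edge, (3) $e$ is a bridge. The hypothesis forces every colored star at the endpoints of $e$ to equal $\partial_G(u)=\{a,b,c\}$, and then short chains of forced colors violate properness of $f$; for instance, in case (2) the two parallel edges must receive $b$ and $c$, forcing the remaining gadget edge to receive $a$ while adjacent to $e$, and in case (3) one is driven to a forced multi-edge in $G$ and a properness contradiction in a second gadget. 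Note also that bridgelessness of $G$ is never needed here, so the difficulty you flag about replacing the bridge-based contradictions is moot; but without carrying out this $S$-specific case analysis, your proposal is an accurate skeleton of the paper's argument with its central lemma unproven.
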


\begin{proof} Let $G$ be a connected cubic graph with $G\prec S$, and let $f$ be the corresponding coloring. Clearly, $G$ has no a perfect matching (see (C) of lemma \ref{PrecLemma}). 

Again, an edge $e\in E(G)$ is used (with respect to $f$) if $f^{-1}(e)\neq \emptyset$.

First of all let us show that if an edge of $G$ is used, then all edges adjacent to it are used, too. Suppose that $a=uv$ is a used edge of $G$ that is adjacent to a non-used edge. Suppose that $v$ is incident to a non-used edge. Let $b$ and $c$ be the other edges incident to $u$.

Since $a$ is used, there is $e\in E(S)$ with $f(e)=a$. We will consider three cases.\\

Case 1: The multiplicity of $e$ is two in $S$. Suppose that the edge parallel to $e$ is colored by $b$. Then the two edges of $S$ forming a triangle with $e$ must be colored with $c$ which is impossible.\\

Case 2: $e$ is adjacent to an edge of multiplicity two in $S$. Then the two parallel edges must be colored with $b$ and $c$, and hence the other edge that is adjacent to the same two parallel edges must be colored with $a$, which is again a contradiction.\\

Case 3: $e$ is a bridge in $S$. Let $e'$ and $e''$ be two non-bridge edges of $S$ that are adjacent to $e$. We can assume that $f(e')=b$ and $f(e'')=c$. Finally, let $g$ and $h$ be the two parallel edges adjacent to $e'$ and $e''$. By Case 1, $g$ and $h$ cannot be colored by $a$, hence the colors of $g$ and $h$ must be adjacent to both $b$ and $c$. It is clear that if $x$ and $y$ denote the colors of $g$ and $h$, then $x$ and $y$ must form a multi-edge in $G$.

Let $z$ be the bridge of $S$ colored by $b$. By Case 2, none of the edges adjacent to $z$ and non-adjacent to $e$, can be colored by $a$, hence these two edges are colored $x$ and $y$. Now, observe that the two parallel edges adjacent to these two edges must be colored by $b$ and $y$, or $c$ and $y$, which is a contradiction.\\

To complete, the proof, let us note that since $G$ is connected, the proved property implies that all edges of $G$ are used, hence $|E(G)|\leq |E(S)|=15$ or $|V(G)|\leq |V(S)|=10$. Proposition \ref{proposSylvester} implies that $G=S$.
$\square$

\end{proof}

\begin{acknowledgement}
The author would like to thank Giuseppe Mazzuoccolo for useful discussions over the subject. He is also thankful to one of the referees who has significantly shortened the proof of theorem \ref{MainResult}.
\end{acknowledgement}

\end{document}